\newcommand{\SB}{\{\,}
\newcommand{\SM}{\;{|}\;}
\newcommand{\SE}{\,\}}
\newcommand{\CCC}{\mathcal{C}}
\newcommand{\QQQ}{\mathcal{Q}}
\newcommand{\FFF}{\mathcal{F}}
\newcommand{\Nat}{\mathbb{N}}
\newcommand{\cc}[1]{{\mbox{\textnormal{\textsf{#1}}}}\xspace}
\newcommand{\NP}{\cc{NP}}
\newcommand{\FPT}{\cc{FPT}}
\newcommand{\Weft}{{\cc{W}}}
\newcommand{\W}[1]{{\Weft}{{\normalfont{[#1]}}}}
\newcommand{\paraNP}{\cc{paraNP}}
\newcommand{\III}{\mathcal{I}}
\newcommand{\bigoh}{\mathcal{O}}
\newcommand{\probfont}[1]{\textnormal{\textsc{#1}}}
 \newcommand{\INCLUS}{\probfont{In-Clustering}}
\newcommand{\PAIRCLUS}{\probfont{Diam-Clustering}}
\newcommand{\blank}{{\small\square}}
\newcommand{\LDIAMCL}{\probfont{Large Diam-Cluster}}
\newcommand{\DISP}{\probfont{Pow-Hyp-IS}}
\newcommand{\DISPq}{\probfont{Pow-Hyp-IS-Completion}}
\newcommand{\yes}{\textsc{Yes}}
\def\ie{{\em i.e.}}
\newcommand{\pbDef}[3]{ \noindent
\begin{center}
\begin{boxedminipage}{0.98 \columnwidth}
#1\\[5pt]
\begin{tabular}{l p{0.83 \columnwidth}}
Input: & #2\\
Question: & #3
\end{tabular}
\end{boxedminipage}
\end{center}
}
\newcommand{\pbDefP}[4]{ \noindent
\begin{center}
\begin{boxedminipage}{0.98 \columnwidth}
#1\\[5pt]
\begin{tabular}{l p{0.80 \columnwidth}}
Input: & #2\\
Parameter: & #3\\
Question: & #4
\end{tabular}
\end{boxedminipage}
\end{center}
}
\newcommand{\HSET}{\Delta}
\newcommand{\HDIST}{\delta}
\newcommand{\DIAM}{\gamma}
\newcommand{\HN}[2]{N_{#2}({#1})}
\newcommand{\compG}{G}
\title{From Data Completion to Problems on Hypercubes: A Parameterized
  Analysis of the Independent Set Problem}
    \titlerunning{From Data Completion to Problems on Hypercubes}
\author{Eduard Eiben}{Department of Computer Science, Royal Holloway, University of London, Egham, UK}{eduard.eiben@gmail.com}{https://orcid.org/0000-0003-2628-3435}{}
\author{Robert Ganian}{Algorithms and Complexity Group, TU Wien, Vienna, Austria}{rganian@gmail.com}{https://orcid.org/0000-0002-7762-8045}{Robert Ganian acknowledges support from Project No.\ 10.55776/Y1329 of the Austrian Science Fund (FWF).}
\author{Iyad Kanj}{School of Computing, DePaul University, Chicago, USA}{ikanj@cdm.depaul.edu}{https://orcid.org/0000-0003-1698-8829}{Iyad Kanj acknowledges support from DePaul University through URC grant 602061.}
\author{Sebastian Ordyniak}{University of Leeds, School of Computing, Leeds, UK}{sordyniak@gmail.com}{https://orcid.org/0000-0003-1935-651X}{Project EP/V00252X/1 of the Engineering and Physical Sciences Research Council (EPSRC).}
\author{Stefan Szeider}{Algorithms and Complexity Group, TU Wien,
  Vienna,
  Austria}{sz@ac.tuwien.ac.at}{https://orcid.org/0000-0001-8994-1656}{Stefan
  Szeider acknowledges support from Project No.\
  10.55776/P36420 of the Austrian Science Fund (FWF).}
\authorrunning{E.\ Eiben, R.\ Ganian, I.\ Kanj, S.\ Ordyniak, S.\ Szeider}  
\keywords{Independent Set, Powers of Hypercubes, Diversity,
  Parameterized Complexity, Incomplete Data}
\begin{document}   
 \maketitle
\begin{abstract}
  Several works have recently investigated the parameterized
  complexity of data completion problems, motivated by their
  applications in machine learning, and clustering in particular. Interestingly,
  these problems can be equivalently formulated as classical graph
  problems on induced subgraphs of powers of partially-defined
  hypercubes.

  In this paper, we follow up on this recent direction by investigating the
  Independent Set problem on this graph class, which has been studied in the
  data science setting under the name Diversity.  We obtain a
  comprehensive picture of the problem's parameterized complexity and
  establish its fixed-parameter tractability w.r.t. the solution size
  plus the power of the hypercube.

  Given that several such FO-definable problems have been shown to be
  fixed-parameter tractable on the considered graph class, one may ask
  whether fixed-parameter tractability could be extended to capture
  all FO-definable problems. We answer this question in the negative
  by showing that FO model checking on induced subgraphs of
  hypercubes is as difficult as FO model checking on general graphs.
\end{abstract}
  \section{Introduction}
\label{sec:intro}

Recently, there has been an increasing interest in studying the parameterized complexity of clustering problems motivated by their applications in machine learning~\cite{fomin1,fomin3,fomin2,fomin7,fomin6,fomin8,eibenesa,eibenjcss,fomin4,fomin5,ganian,niedermeierradius,niedermeierdiameter}, particularly their applications to fundamental clustering problems~\cite{clusteringbook1,clusteringbook2,clusteringbook4,clusteringbook3}.  
In many of these clustering problems, we are given a set of $d$-dimensional vectors over the Boolean/binary domain, where the vectors are regarded as rows of a matrix. It is worth noting that due to the applications of such problems in incomplete-data settings, a number of past works on the topic also studied settings where some of the entries in these vectors are unknown~\cite{icml,eibenesa,eibenjcss,ganian,fomin8,cp10,cr09,ct10,ev13,hmrw14}.
The objective is to determine if these vectors (or, in the incomplete-data setting, their completions) satisfy some desirable clustering properties.
  Examples of such properties include admitting a partitioning into $k$ clusters each of diameter (or radius) at most $r$ (for some given $k, r \in \Nat$), or 
admitting a $k$-cluster of diameter (or radius) at most $r$, where the distance under consideration is typically the Hamming distance~\cite{charikar,frieze,eibenesa,eibenjcss,normclustering,lingashammingcenter,lingasapxclustering,gonzalez,GrammNiedermeierRossmanith03}; here, a $k$-cluster of diameter $r$ is a set of $k$ points which have pairwise distance of at most $r$.

As it turns out, many of these well-studied clustering problems can be formulated as classical graph problems on induced subgraphs of powers of the hypercube graph. For instance, 
finding a cluster of diameter at most $r \in \Nat$, for a given $r$,
is equivalent to the \textsc{Clique} problem defined on the subgraph of the $r$-th power of the hypercube that is induced by the subset of hypercube vertices corresponding to the given input vectors.
Similarly, partitioning the set of vectors into $k$ clusters each of diameter at most $r$, for some given $r, k \in \Nat$, is equivalent to the partitioning into $k$ cliques problem on the same graph class, whereas partitioning the set of vectors into clusters, each of radius at most $r$ with respect to some vector in the set, is equivalent to the $k$-dominating set problem on the same graph class described above. We remark that, to the best of our knowledge, this graph class is not a subclass of commonly studied graph classes and has not been considered in previous works pertaining to algorithmic upper or lower bounds for graph-theoretic problems.

\smallskip
\noindent \textbf{Contribution.}\quad
In this paper, we study the parameterized complexity of another classical graph problem defined on induced subgraphs of powers of the hypercube: the \textsc{Independent Set} problem.  In the context of data analytics, the problem arises when studying the
``diversity'' of a given set of vectors, a notion that can be viewed
as the opposite of minimising the number of clusters in a cluster
partitioning of the set of vectors (in fact, in the area of data analytics this is studied directly under the nomenclature
\emph{diversity} or dispersion~\cite{CeccarelloPPU17,gawrychowski_et_al,SacharidisMSPV18}). 
     More precisely, motivated by the aforementioned extensive interest in the analysis of incomplete data, we focus on the more general incomplete data setting. 
We refer to this problem as \DISPq: given a set of Boolean vectors with some missing entries and integers $k$ and $r$, the goal is to complete the missing entries so that the resulting set of vectors contains a subset $S$ of $k$ vectors such that the Hamming distance between each pair is at least $r+1$ (or to correctly determine that such a set does not exist).

The main contribution of this paper is a complete characterisation of the parameterized complexity of
\DISPq{} w.r.t.~the two parameters $k$ and $r$: we provide a fixed-parameter algorithm for \DISPq{} when parameterized by $k+r$, and
complement this positive result with intractability results for the
cases where any of these two parameters is dropped. 
In particular, we show that the problem is \NP-complete
already for $r=2$---that is, the problem is \emph{\paraNP{}-hard} parameterized by $r$, and \W{1}-hard parameterized by $k$ alone.  Interestingly, the \FPT{} result shows that the parameterized complexity of the problem is independent of any restrictions on the number or the structure of the missing entries in the input vectors---contrasting many of the previous results on clustering incomplete data~\cite{icml,eibenesa,eibenjcss,ganian}. We remark that even the fixed-parameter tractability of the problem in the complete data setting (i.e., where all entries are known) is non-obvious, but follows as an immediate corollary of our result.
 
For our final contribution, we revisit the observation that several of the complete-data clustering problems recently considered in the literature (e.g., see~\cite{eibenesa,eibenjcss}) reduce to well-known graph problems on the class of induced subgraphs of powers of the hypercube. 
 Since it was shown that all of these graph problems are fixed-parameter tractable when restricted to this graph class and the graph problems are
  expressible in First Order Logic (FO), a natural question to ask is whether these \FPT{} results can be generalised to any graph problem expressible in FO logic.
  We resolve this question in the negative.

\smallskip
\noindent \textbf{Related Work.}\quad
The problem of computing the diversity of a data set, which forms the underpinning of our study of \DISPq, has been studied in a variety of different contexts and settings. For instance, Ceccarello, Pietracaprina, Pucci and Upfal studied approximation algorithms for the problem~\cite{CeccarelloPPU17}. Gawrychowski, Krasnopolsky, Mozes, and Weimann obtained a linear-time algorithm for the problem when the data set is represented as a tree~\cite{gawrychowski_et_al}, improving upon the previous polynomial-time algorithm of Bhattacharya and Houle~\cite{BhattacharyaH99}. Sacharidis, Mehta, Skoutas, Patroumpas and Voisard provided heuristics for dynamic versions of the problem~\cite{SacharidisMSPV18}.

More broadly, there is extensive work on problems arising in the context of incomplete data. Hermelin and Rozenberg~\cite{hermelin} studied the {\sc Closest String with Wildcards} problem, which can be seen as the problem of finding a data completion and a center to a minimum-radius cluster containing all the data points. Koana, Froese and Niedermeier~\cite{niedermeierradius} recently revisited the earlier work of Hermelin and Rozenberg~\cite{hermelin} and obtained, among other results, a fixed-parameter algorithm for that problem parameterized by the radius plus the maximum number of missing entries per row; see also the related work of the same authors~\cite{niedermeierdiameter}.
Eiben et al.~considered a number of different clustering problems in the presence of incomplete data~\cite{EibenGKOS21,eibenesa}, and a subset of these authors previously investigated the fundamental \textsc{Matrix Completion} problem in the same setting~\cite{icml}.
The parameterized complexity of $k$-means clustering on incomplete data was investigated by Eiben et al.~\cite{fomin8} and Ganian et al.~\cite{ganian}.

\section{Preliminaries}
 \label{section:prelims}
\newcommand{\compGq}{\compG}

\newcommand{\HSETq}{\HSET}
 \newcommand{\HDISTq}{\HDIST}
 \newcommand{\DIAMq}{\DIAM}
\newcommand{\MDIAMq}{\DIAM_{\max}}
 \newcommand{\HNq}{\HN}
 
 \paragraph*{Problem Terminology and Definition}
Let $\vec{a}$ and $\vec{b}$ be two vectors in $\{0,1,\blank\}^d$, where $\blank$ is used to represent coordinates whose value is unknown (\ie, missing entries). We denote  
by $\HSET(\vec{a},\vec{b})$ the set of coordinates in which
$\vec{a}$ and $\vec{b}$ are guaranteed to differ, \ie,
$\HSET(\vec{a},\vec{b})=\SB i
\SM (\vec{a}[i]=1 \wedge \vec{b}[i]=0)\vee (\vec{a}[i]=0 \wedge \vec{b}[i]=1) \SE$, and we denote by
$\HDIST(\vec{a},\vec{b})$ the \emph{Hamming distance} between
$\vec{a}$ and $\vec{b}$ measured only between known entries, \ie,
$|\HSET(\vec{a},\vec{b})|$. Moreover, for a subset $D' \subseteq [d]$
of coordinates, we denote by $\vec{a}[D']$ the vector $\vec{a}$
restricted to the coordinates in $D'$.

Let $M\subseteq \{0,1\}^d$ and let $[d]=\{1,\dots,d\}$. For a vector $\vec{a} \in M$ and $t \in \Nat$, we
denote by $\HN{\vec{a}}{t}$ the
\emph{$t$-Hamming neighbourhood}
 of $\vec{a}$, \ie, the set $\SB
\vec{b} \in M \SM \HDIST(\vec{a},\vec{b})\leq t\SE$ and by $\HN{M}{t}$ the
set $\bigcup_{\vec{a} \in M}\HN{\vec{a}}{t}$.
   We say that $M^*\subseteq \{0,1\}^d$ is a \emph{completion} of $M\subseteq \{0,1,\blank\}^d$ if there is a bijection $\alpha: M \rightarrow M^*$ such that for all $\vec{a}\in M$ and all $i\in [d]$ it holds that either $\vec{a}[i]=\blank$ or $\alpha(\vec{a})[i]=\vec{a}[i]$.

We now proceed to give the formal definition of the problem under consideration:

\pbDef{\DISPq{}}{A set $M$ with elements from $\{0,1,\blank\}^d$ and $k, r \in \Nat$.}{Is there a completion $M^*$ of $M$ and a subset $S$ of $M^*$ with $|S|= k$ such that, for any two vectors $a, b \in S$, we have $\delta(a, b) \geq r+1$?}

Observe that in a matrix representation of the above problem, we can
represent the input matrix as a \emph{set} of vectors where each row
of the matrix corresponds to one element in our set.  

We remark that even though the statements are given in the form of
decision problems, all tractability results presented in this paper are constructive and the associated algorithms can also output a solution (when it exists) as a witness, along with the decision.
 In the case where we restrict the input to vectors over $\{0,1\}^d$ (\ie, where all entries are known), we omit ``-\textsc{Completion}'' from the problem name.

 \paragraph*{Parameterized Complexity} The basic motivation behind parameterized complexity is to find a parameter that describes the structure of
the problem instance such that the combinatorial explosion can be confined to
this parameter. More formally, a {\it parameterized problem} $Q$ is a subset of $\Omega^* \times
\mathbb{N}$, where $\Omega$ is a fixed finite alphabet. Each instance of $Q$ is a pair $(I, \kappa)$, where $\kappa \in \Nat$ is called the {\it
parameter}. A parameterized problem $Q$ is
{\it fixed-parameter tractable} (\FPT)~\cite{FlumGrohe06,DowneyFellows13,CyganFKLMPPS15}, if there is an
algorithm, called an {\em \FPT-algorithm},  that decides whether an input $(I, \kappa)$
is a member of $Q$ in time $f(\kappa) \cdot |I|^{\bigoh(1)}$, where $f$ is a computable function and $|I|$ is the input instance size.  The class \FPT{} denotes the class of all fixed-parameter
tractable parameterized problems.

A parameterized problem $Q$
is {\it \FPT-reducible} to a parameterized problem $Q'$ if there is
an algorithm, called an \emph{\FPT-reduction}, that transforms each instance $(I, \kappa)$ of $Q$
into an instance $(I', \kappa')$ of
$Q'$ in time $f(\kappa)\cdot |I|^{\bigoh(1)}$, such that $\kappa' \leq g(\kappa)$ and $(I, \kappa) \in Q$ if and
only if $(I', \kappa') \in Q'$, where $f$ and $g$ are computable
functions.  
Based on the notion of \FPT-reducibility, a hierarchy of
parameterized complexity, {\it the \cc{W}-hierarchy} $=\bigcup_{t
\geq 0} \W{t}$, where $\W{t} \subseteq \W{t+1}$ for all $t \geq 0$, has
been introduced, in which the $0$-th level \W{0} is the class {\it
\FPT}. The notions of hardness and completeness have been defined for each level
\W{$i$} of the \cc{W}-hierarchy for $i \geq 1$ \cite{DowneyFellows13,CyganFKLMPPS15}. It is commonly believed that $\W{1} \neq \FPT$ (see \cite{DowneyFellows13,CyganFKLMPPS15}). The
\W{1}-hardness has served as the main working hypothesis of fixed-parameter
intractability.
   A problem is \emph{\paraNP{}-hard} if it is \NP-hard for a constant value of the parameter~\cite{FlumGrohe06}.

\paragraph*{Sunflowers} A \emph{sunflower} in a set family $\FFF$ is a subset $\FFF' \subseteq \FFF$ such that all pairs of elements in $\FFF'$ have the same intersection.

\begin{lemma}[\cite{Erdos60,FlumGrohe06}]\label{lem:SF}
  Let $\FFF$ be a family of subsets of a universe $U$, each of cardinality exactly
  $b$, and let $a \in \mathbb{N}$. If $|\FFF|\geq b!(a-1)^{b}$, then $\FFF$
  contains a sunflower $\FFF'$ of cardinality at least $a$. Moreover,
  $\FFF'$ can be computed in time polynomial in $|\FFF|$.
\end{lemma}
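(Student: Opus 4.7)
The plan is to prove the lemma by induction on the uniform set size $b$, combining a greedy disjointness argument with a pigeonhole step. The base case is straightforward: for $b=1$ the family consists of singletons, and as soon as $\FFF$ contains at least $a$ distinct ones they already form a sunflower with empty core. For the inductive step I would case on whether $\FFF$ admits a large subcollection of pairwise disjoint sets.

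Concretely, I would first greedily extract a maximal subfamily $\FFF_0 \subseteq \FFF$ of pairwise disjoint sets. If $|\FFF_0| \geq a$, then $\FFF_0$ is itself a sunflower with empty core and we are done. Otherwise $|\FFF_0| \leq a-1$, so the union $Y = \bigcup \FFF_0$ has $|Y| \leq b(a-1)$, and by maximality of $\FFF_0$ every $S \in \FFF$ must meet $Y$. By pigeonhole, some element $y \in Y$ belongs to at least $|\FFF|/|Y| \geq b!(a-1)^b / (b(a-1)) = (b-1)!(a-1)^{b-1}$ sets in $\FFF$. Setting $\FFF_y' = \SB S \setminus \{y\} \SM S \in \FFF,\ y \in S \SE$ yields a family of $(b-1)$-element sets meeting the inductive hypothesis, so induction produces a sunflower of cardinality at least $a$ inside $\FFF_y'$; re-inserting $y$ into each of its members gives a sunflower of the same cardinality in $\FFF$ whose core contains $y$.

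For the algorithmic claim I would implement the steps directly: build the maximal disjoint subfamily by a linear scan, count element frequencies in $Y$ to identify $y$, and recurse on $\FFF_y'$. The recursion depth is at most $b$ and each layer is polynomial in $|\FFF|$, so the entire procedure runs in polynomial time and explicitly constructs the witness $\FFF'$. The only subtle point is verifying that the threshold $b!(a-1)^b$ is calibrated exactly so that, after dividing by the bound $b(a-1)$ on $|Y|$, we land precisely at the inductive threshold $(b-1)!(a-1)^{b-1}$; this is what makes the bound tight and is the sole place where the numerical constants require care.
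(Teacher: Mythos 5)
The paper offers no proof of this lemma---it is quoted as the classical Erd\H{o}s--Rado Sunflower Lemma from the cited sources---so there is nothing in-paper to compare against; your argument is the standard textbook proof (greedy extraction of a maximal pairwise-disjoint subfamily, pigeonhole on its union to find a frequent element $y$, induction on $b$ after deleting $y$), and it is essentially correct, including the polynomial-time construction. The one genuine issue is the off-by-one that you yourself flag as ``the sole place where the numerical constants require care'': your induction establishes the lemma under the \emph{strict} hypothesis $|\FFF| > b!(a-1)^b$, which is how Flum and Grohe state it, whereas the statement here reads $|\FFF|\geq b!(a-1)^b$. Under the non-strict threshold your base case does not close: for $b=1$ the hypothesis guarantees only $a-1$ distinct singletons, yielding a sunflower of cardinality $a-1$ rather than $a$, and the same deficit of one survives the division by $|Y|\leq b(a-1)$ at every level of the induction. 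This is a blemish in the statement as transcribed rather than in your approach, but as written your proof establishes the strict-inequality version and not literally the displayed claim; to match the statement you would need either to assume $|\FFF|\geq b!(a-1)^b+1$ or to observe that the application in Lemma~\ref{lem:ISprune} tolerates the weaker conclusion. Everything else---injectivity of $S\mapsto S\setminus\{y\}$ on the sets containing $y$, re-insertion of $y$ into the core, and the polynomial bound on the recursion of depth at most $b$---is sound.
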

                          \section{The Parameterized Complexity of \DISPq}
\label{sec:dispersion}

Our aim for \DISPq\ is to establish fixed-parameter tractability parameterized by $k+r$ (\ie, regardless of the structure or number of missing entries).
 As our first step, we show that all rows in an arbitrary instance $(M,k,r)$ can be, w.l.o.g., assumed to contain at most $\bigoh(k\cdot r)$ many $\blank$'s.  

Next, we observe that if $M$ is sufficiently large and the $r$-Hamming neighbourhood of each vector is upper-bounded by a function of $k+r$, then---since the number of $\blank$'s is bounded---$(M,k,r)$ is a YES-instance. The argument here is analogous to the classical argument showing that \textsc{Independent Set} is trivial on large bounded-degree graphs.

On a high level, we would now like to find and remove an ``irrelevant vector''
from $M$---since here the number
of $\blank$'s on \emph{every} row is bounded, any instance reduced in
this way to only contain a bounded number of vectors can be solved via
a brute-force fixed-parameter procedure. However, finding an
irrelevant vector is rather challenging, primarily because the
occurrence of $\blank$'s is not restricted. Instead, we develop a more
powerful set representation $\FFF'$ for vectors in the instance which
also uses elements to keep track of the presence of $\blank$'s in the
neighbours of $\vec{v}$. We can then apply the Sunflower Lemma to find
a sufficiently-large sunflower in $\FFF'$, and in the core of the
proof we argue that (1) such a sunflower consists of at most a bounded
number of ``important petals'' (which can be identified in polynomial
time), and (2) any petal that is not important represents an
irrelevant vector.

 \subsection{Dealing with Unstructured Missing Data}
In this subsection, we design an algorithm for \DISPq\ which remains efficient even when the number and placement of unknown entries is not explicitly restricted on the input.

We begin with a simple lemma that allows us to deal with vectors (\ie, rows) with a large number of missing entries.    For brevity, let a $k$-diversity set be a set containing $k$ vectors which have pairwise Hamming distance at least $r+1$.

\begin{lemma}
\label{lem:ISmanyq}
Let $\III=(M,k,r)$ be an instance of \DISPq{} where $k\geq 1$ and let $\vec{v}\in M$ be a vector containing more than $(k-1)\cdot(r+1)$-many $\blank$'s. Then $\III$ is a \textsc{YES}-instance if and only if $\III'=(M\setminus \{\vec{v}\},k-1,r)$ is a \textsc{YES}-instance. Moreover, a completion and $k$-diversity set for $\III$ can be computed from a completion and $(k-1)$-diversity set for $\III'$ in linear time.
\end{lemma}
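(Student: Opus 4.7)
The plan is to establish both directions of the biconditional; only the backward one requires any real argument. For the forward direction, given a completion $M^*$ (with bijection $\alpha\colon M\to M^*$) and a $k$-diversity set $S\subseteq M^*$ witnessing $\III\in\textsc{Yes}$, I would obtain $S'$ from $S$ by removing $\alpha(\vec{v})$ if it lies in $S$ and otherwise removing an arbitrary element. Then $S'\subseteq M^*\setminus\{\alpha(\vec{v})\}$, the latter is a completion of $M\setminus\{\vec{v}\}$ via the restriction of $\alpha$, and $S'\subseteq S$ inherits pairwise Hamming distance at least $r+1$; hence $\III'\in\textsc{Yes}$.

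For the backward direction, suppose $N^*$ is a completion of $M\setminus\{\vec{v}\}$ and $S'=\{\vec{w}_1,\dots,\vec{w}_{k-1}\}\subseteq N^*$ is a $(k-1)$-diversity set. The idea is to exploit the large $\blank$-set $D:=\SB i\in[d]\SM \vec{v}[i]=\blank\SE$ of $\vec{v}$ to complete $\vec{v}$ into a vector $\vec{v}^*$ that disagrees with each $\vec{w}_j$ on a dedicated block of $r+1$ coordinates, thereby guaranteeing $\delta(\vec{v}^*,\vec{w}_j)\geq r+1$ and turning $S'\cup\{\vec{v}^*\}$ into a $k$-diversity set. Concretely, since $|D|>(k-1)(r+1)$ I can partition $D$ into $k-1$ blocks $D_1,\dots,D_{k-1}$ with $|D_j|\geq r+1$ each. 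I then set $\vec{v}^*[i]=\vec{v}[i]$ on $[d]\setminus D$ and $\vec{v}^*[i]=1-\vec{w}_j[i]$ for $i\in D_j$; this yields $D_j\subseteq\HSET(\vec{v}^*,\vec{w}_j)$, and hence $\delta(\vec{v}^*,\vec{w}_j)\geq|D_j|\geq r+1$ for every $j\in[k-1]$. Therefore $N^*\cup\{\vec{v}^*\}$, equipped with $\alpha$ extended by $\vec{v}\mapsto\vec{v}^*$, is a completion of $M$, and $S'\cup\{\vec{v}^*\}$ is a $k$-diversity set in it.

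The construction inspects each coordinate of $\vec{v}$ once and does constant work per vector in $S'$, so the completion and $k$-diversity set for $\III$ can be produced from those of $\III'$ in linear time. I do not foresee any serious obstacle; the only delicate point is the counting underlying the partition of $D$, which is exactly what the strict inequality $|D|>(k-1)(r+1)$ is designed to ensure.
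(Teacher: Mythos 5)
Your proof is correct and follows essentially the same route as the paper's: the forward direction is immediate, and for the backward direction both arguments exploit the more than $(k-1)\cdot(r+1)$ missing entries of $\vec{v}$ to assign each of the $k-1$ vectors of the given diversity set a dedicated block of at least $r+1$ blank coordinates on which the completed $\vec{v}^*$ is forced to disagree with it, giving $\HDIST(\vec{v}^*,\vec{w}_j)\geq r+1$ by the triangle-free counting on those blocks alone. The only immaterial differences are that you partition all of $D$ rather than a chosen subset of size exactly $(k-1)\cdot(r+1)$ (so for $k=1$ you should still fill the unassigned $\blank$'s arbitrarily, as the paper does with $0$), and that you treat the case $\alpha(\vec{v})\notin S$ explicitly in the forward direction.
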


\begin{proof}
The forward direction is trivial: for any completion $M^*$ of $M$ and $k$-diversity set $S$ in $M^*$, we can obtain a $(k-1)$-diversity set and completion for $\III'$ by simply removing $\vec{v}$ from $M^*$ and $S$.

For the backward direction, consider a completion ${M'}^*$ of $M'=M\setminus \vec{v}$ and a $(k-1)$-diversity set $S=\{\vec{s_1},\dots,\vec{s_{k-1}}\}$ in $M'^*$. Let us choose an arbitrary set $C$ of $(k-1)\cdot(r+1)$ coordinates in $\vec{v}$ that all contain $\blank$, and let us then partition $C$ into $k$-many subsets $\alpha_1,\dots,\alpha_k$ each containing precisely $r+1$ coordinates. Now consider the vector $\vec{v}^*$ obtained from $\vec{v}$ as follows:
\begin{itemize}
\item for each $i\in [k-1]$ and every coordinate $j\in \alpha_i$, set $\vec{v}^*[j]$ to the opposite value of $\vec{s_i}[j]$ (\ie, $\vec{v}^*[j]=1$ if and only if $\vec{s_i}[j]=0)$;
\item for every other coordinate $j$ of $\vec{v}^*$, we set $\vec{v}^*[j]=\vec{v}[j]$ if $\vec{v}[j]\neq \blank$ and $\vec{v}^*[j]=0$ otherwise.
\end{itemize}

Clearly, $M^*=M'^*\cup \{\vec{v}^*\}$ is a completion of $M$. Moreover, since $\vec{v}^*$ differs from each vector in $S$ in at least $r+1$ coordinates, $S\cup\{\vec{v}^*\}$ is a $k$-diversity set in $M^*$. 
\end{proof}

Next, we show that instances which are sufficiently large and where each vector only ``interferes with'' a bounded number of other vectors are easy to solve.
Technically, let $$\zeta(k,r)=3^{(k-1)\cdot(r+1)}\cdot \sum_{\alpha\in [(k-1)\cdot (r+1)+r]}\big(\alpha! \cdot ((k-1)\cdot 2\cdot (3(k-1)\cdot (r+1)+2r))^\alpha\big)$$

 be the exact meaning of ``sufficiently large'' here; for brevity, note that $\zeta(k,r)\in (kr)^{\bigoh(kr)}$.

\begin{lemma}
\label{lem:ISdegree}
Let $\III=(M,k,r)$ be an instance of \DISPq{}. If
$|M|\geq k\cdot \zeta(k,r)$ and
$|\HN{\vec{v}}{r}|< \zeta(k,r)$
for every $\vec{v}\in M$, then a $k$-diversity set in $\III$ can be found in polynomial time.
\end{lemma}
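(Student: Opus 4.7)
The plan is to apply a greedy selection reminiscent of the classical argument that \textsc{Independent Set} admits a trivial algorithm on sufficiently large graphs of bounded maximum degree. The crucial observation that enables a ``graph-like'' argument here is that for any two vectors $\vec{a}, \vec{b} \in M$ and any completion $M^*$ of $M$ witnessed by the bijection $\alpha$, the true Hamming distance between $\alpha(\vec{a})$ and $\alpha(\vec{b})$ in $\{0,1\}^d$ is at least $\HDIST(\vec{a}, \vec{b})$: every coordinate where $\vec{a}$ and $\vec{b}$ are already known to differ remains a differing coordinate regardless of how the $\blank$ entries are filled in. Consequently, it suffices to exhibit a set $S \subseteq M$ of $k$ vectors with pairwise $\HDIST$ at least $r+1$, after which $M$ can be completed arbitrarily (e.g.\ by replacing every $\blank$ by $0$).

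The algorithm I would implement is the following. Initialize $M_0 := M$ and $S := \emptyset$, and for $i = 1, \dots, k$, pick an arbitrary $\vec{v}_i \in M_{i-1}$, add it to $S$, and set $M_i := M_{i-1} \setminus \HN{\vec{v}_i}{r}$. By construction, for every pair $\vec{v}_i, \vec{v}_j$ with $i < j$ we have $\vec{v}_j \notin \HN{\vec{v}_i}{r}$, and hence $\HDIST(\vec{v}_i, \vec{v}_j) > r$, which is exactly what is needed. The only point that must be verified is that the selection never runs out of vectors, i.e., $M_{i-1} \neq \emptyset$ for every $i \leq k$. Since each iteration removes at most $|\HN{\vec{v}_i}{r}| < \zeta(k,r)$ vectors from $M_{i-1}$, after $k-1$ iterations the pool still contains at least $|M| - (k-1)\zeta(k,r) \geq \zeta(k,r) \geq 1$ vectors by the hypothesis $|M| \geq k \cdot \zeta(k,r)$. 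Each iteration runs in $\bigoh(|M| \cdot d)$ time to compute $\HN{\vec{v}_i}{r}$, so the whole procedure is polynomial.

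There is no substantial obstacle here; the main conceptual point is the observation that missing entries work \emph{in our favour} for the existence direction of a $k$-diversity set, because completing $\blank$'s can only increase the Hamming distance between two vectors, never decrease it. This reduces the task to finding $k$ vectors in $M$ with pairwise $\HDIST \geq r+1$, for which the bounded-neighbourhood hypothesis makes greedy selection both correct and efficient.
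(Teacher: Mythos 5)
Your proposal is correct and follows essentially the same route as the paper's proof: a greedy selection that repeatedly picks a vector and discards its $r$-Hamming neighbourhood, combined with the observation that completing $\blank$'s can only increase $\HDIST$, so a $k$-diversity set found on the partial vectors survives any completion. The counting argument ($|M|-(k-1)\zeta(k,r)\geq \zeta(k,r)\geq 1$) matches the paper's reasoning, so nothing is missing.
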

\begin{proof}
One can find a solution to $\III$ by iterating the following greedy procedure $k$ times: choose an arbitrary vector $\vec{v}$, add it into a solution, and delete all other vectors with Hamming distance at most $r$ from $\vec{v}$. By the bound on $|\HN{\vec{v}}{r}|$, each choice of $\vec{v}$ will only lead to the deletion of at most
$\zeta(k,r)$ vectors from $M$.
 Moreover, since $\HDIST$ measures the Hamming distance only between known entries, \emph{any} completion of the missing entries can only increase (and never decrease) the Hamming distance between vectors. Hence, the size of $M$ together with the bounded size of the Hamming neighbourhood of $\vec{v}$ guarantee that this procedure will find a solution of cardinality $k$ in $\III$ which will remain valid for every completion of $M$. 
\end{proof}

We can now move on to the main part of the proof: a procedure which either outputs a solution outright or finds an irrelevant vector.

\begin{lemma}
\label{lem:ISprune}
Let $\III=(M,k,r)$ be an instance of \DISPq{} such that $|\HN{\vec{v}}{r}|\geq \zeta(k,r)$ for some vector $\vec{v}\in M$ and such that each vector in $M$ contains at most $(k-1)\cdot (r+1)$ $\blank$'s. There is a polynomial-time procedure that finds a vector $\vec{f}\in M$ satisfying the following properties: \begin{itemize}
 \item $(M,k,r)$ is a YES-instance if and only if $\III' =(M\setminus \{\vec{f}\},k,r)$ is a YES-instance, and
 \item A completion and diversity set for $\III$ can be computed from a solution and diversity set for $\III'$ in linear time.
 \end{itemize}
\end{lemma}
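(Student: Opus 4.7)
The plan is the sunflower-based irrelevance argument sketched at the start of this section. To each $\vec{u}\in\HN{\vec{v}}{r}$ I associate a set $A(\vec{u})$ whose elements are pairs from $[d]\times\{\ast,\blank\}$: put $(i,\ast)\in A(\vec{u})$ whenever $i\in\HSET(\vec{v},\vec{u})$, and $(i,\blank)\in A(\vec{u})$ whenever $\vec{u}[i]=\blank$. Since $\HDIST(\vec{v},\vec{u})\leq r$ and the blank-bound gives $\vec{u}$ at most $(k-1)(r+1)$ blanks, $|A(\vec{u})|\leq (k-1)(r+1)+r$. Moreover, two vectors sharing the same $A$-set can differ only at the at most $(k-1)(r+1)$ coordinates with $\vec{v}[i]=\blank$, each of which admits one of the three values $\{0,1,\blank\}$; hence at most $3^{(k-1)(r+1)}$ vectors of $\HN{\vec{v}}{r}$ share any given $A$. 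Consequently the number of distinct $A$-sets is at least $\zeta(k,r)\cdot 3^{-(k-1)(r+1)}=\sum_{\alpha\in[(k-1)(r+1)+r]}\alpha!\cdot(a-1)^{\alpha}$ for $a-1=2(k-1)(3(k-1)(r+1)+2r)$. Bucketing these sets by cardinality (at most $(k-1)(r+1)+r$ buckets) and applying Lemma~\ref{lem:SF} to the largest bucket extracts in polynomial time a sunflower $A(\vec{u_1}),\ldots,A(\vec{u_a})$ of size $a$ with common core $C$ and pairwise-disjoint petals $P_1,\ldots,P_a$.

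The crux is then to designate one petal-vector as the irrelevant $\vec{f}$. For a candidate partner $\vec{w}\in M$, call a petal $P_j$ \emph{blocked by $\vec{w}$} if no completion of $\vec{u_j}$'s blanks---compatible with the completion of $\vec{w}$ used by a hypothetical solution---can achieve $\delta(\vec{u_j},\vec{w})\geq r+1$. The key combinatorial step is to show that any single $\vec{w}$ can block at most $2(3(k-1)(r+1)+2r)$ petals, by localising the obstruction to the $\leq 3(k-1)(r+1)+2r$ coordinates covered by the union of $\vec{w}$'s blanks, $\vec{v}$'s blanks, and the known differences between $\vec{w}$ and $\vec{v}$. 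Given a hypothetical \YES-witness $\{\vec{f}=\vec{u_i},\vec{s_2},\ldots,\vec{s_k}\}$ of $\III$, swapping $\vec{u_i}$ for some other $\vec{u_j}$ fails only if $P_j$ is blocked by some $\vec{s_l}$ or $\vec{u_j}$ itself lies in the solution, and the choice of $a$ guarantees that these exclusions still leave a valid swap target. Selecting $\vec{f}=\vec{u_i}$ whose petal is not blocked by any $\vec{w}\in M\setminus\{\vec{u_1},\ldots,\vec{u_a}\}$---identifiable in polynomial time by iterating over petals---yields the claimed irrelevant vector, and the reverse direction of the equivalence together with the linear-time solution-transfer are then immediate from the swap.

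The main obstacle will be making the blocking count precise and producing a globally consistent completion for the swapped solution. Petal-disjointness lets $\vec{u_j}$'s private blanks in $P_j\setminus C$ be completed independently to meet the $k-1$ pairwise-distance constraints to $\vec{s_2},\ldots,\vec{s_k}$; blanks lying in $C$ are completed identically to how the original solution completed $\vec{u_i}$'s blanks in $C$, which is consistent because every petal-vector shares the same $\{\ast,\blank\}$-labelled pattern on $C$. Carrying through the case analysis of how $\vec{w}$'s blank-and-difference structure can interfere with a petal, together with the explicit bookkeeping of coordinates in $C$, $P_j$, and $\vec{v}$'s blanks, is precisely where the constants in $\zeta(k,r)$ are calibrated.
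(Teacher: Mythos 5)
Your overall architecture is the paper's: encode each vector in $\HN{\vec{v}}{r}$ as a small set recording its blanks and its known differences from $\vec{v}$, bucket by cardinality, extract a sunflower via Lemma~\ref{lem:SF}, show that each solution partner can ``block'' only about $2(3(k-1)(r+1)+2r)$ petals, and swap $\vec{f}$ for an unblocked petal vector under a tailored completion. However, there is a genuine gap in how you treat the set $Z$ of coordinates where $\vec{v}$ itself is blank. Your $A$-sets record, for $i\in Z$, only whether $\vec{u}[i]=\blank$; the \emph{known values} of $\vec{u}$ on $Z$ are invisible to them (as you yourself observe when bounding the multiplicity of each $A$-set by $3^{(k-1)(r+1)}$). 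Consequently the sunflower gives no control over these values: $\vec{f}=\vec{u_i}$ and a replacement $\vec{f'}=\vec{u_j}$ may disagree on up to $|Z|=(k-1)(r+1)$ coordinates of $Z$, so the swap can lose up to $|Z|$ units of Hamming distance to some partner $\vec{s_l}$ there. The petal accounting cannot absorb this: within a cardinality bucket the gain on $P_j$ is at most matched by the loss on $P_i$ (equal petal sizes), so any extra loss on $Z$ breaks $\HDIST(\vec{f'},\vec{s_l})\geq\HDIST(\vec{f},\vec{s_l})$. A second facet of the same problem: a petal element $(i,\blank)$ with $i\in Z$ cannot be completed ``against $\vec{v}$'' (which is blank at $i$), so you cannot guarantee that completing it creates a difference with all $k-1$ partners simultaneously, undermining the claim that petal blanks can be completed to serve every constraint at once.

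The paper spends the $3^{(k-1)(r+1)}$ factor differently, and that is exactly the fix you need: \emph{before} building the set system, pigeonhole $\HN{\vec{v}}{r}$ by the full pattern on $Z$ and keep only the largest class $N$, so that all candidate vectors agree exactly on $Z$, and let the ground set of the set system range only over $[d]\setminus Z$. Then the $Z$-coordinates (including their blanks, which are completed identically to $\vec{f}$'s) contribute the same amount to $\HDIST(\vec{f},\vec{s_l})$ and $\HDIST(\vec{f'},\vec{s_l})$, and your multiplicity concern becomes moot since the map $\vec{u}\mapsto A(\vec{u})$ is injective on $N$. With that repair, the remainder of your plan --- the per-partner blocking bound via disjointness of petals, the separate case of a partner differing from $\vec{v}$ in more than $3(k-1)(r+1)+2r$ known coordinates (which blocks nothing, by the triangle inequality), and the single completion setting non-core petal blanks opposite to $\vec{v}$ --- coincides with the paper's argument.
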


\begin{proof}
We will begin by constructing a set system over the neighbourhood of $\vec{v}$. Let $Z=\SB z\in [d] \SM \vec{v}[z]=\blank \SE$ be the set of coordinates where $\vec{v}$ is incomplete. Clearly, since 
$$|\HN{\vec{v}}{r}|\geq 3^{(k-1)\cdot(r+1)}\cdot \sum_{\alpha\in [(k-1)\cdot (r+1)+r]}\big(\alpha! \cdot ((k-1)\cdot 2\cdot (3(k-1)\cdot (r+1)+2r))^\alpha\big)$$
  and $|Z|\leq (k-1)\cdot (r+1)$,
we can find a subset $N\subseteq \HN{\vec{v}}{r}$ of vectors whose cardinality is at least 
$\sum_{\alpha\in [(k-1)\cdot (r+1)+r]}\big(\alpha! \cdot ((k-1)\cdot 2\cdot (3(k-1)\cdot (r+1)+2r))^\alpha\big)$
  such that all vectors in $N$ are precisely the same on the coordinates in $Z$, \ie, $\forall \vec{x},\vec{y}\in N: \forall z\in Z: \vec{x}[z]=\vec{y}[z]$.

Now, let $F$ be a set containing $2$ elements for each coordinate $j\in [d]\setminus Z$ of vectors in $M$: the element $\blank_j$ and the element $D_j$. We construct a set system $\FFF$ over $F$ as follows. For each vector $\vec{x}\in N$, we add a set $\hat{x}$ to $\FFF$ that contains:
\begin{itemize}
\item $\blank_j$ if and only if $\vec{x}[j]=\blank$, and
\item $D_j$ if and only if $\vec{x}[j]\neq \blank$ and $\vec{x}[j]\neq \vec{v}[i]$.
\end{itemize}

Observe that, since $\vec{x}$ contains at most $(k-1)\cdot (r+1)$ $\blank$'s by assumption and since $\vec{x}$ differs from $\vec{v}$ in at most $r$-many completed coordinates, every set in $\FFF$ has cardinality at most $(k-1)\cdot (r+1)+r$. 
This means that there exists $\alpha\in [(k-1)\cdot (r+1)+r]$ such that there are at least $\alpha! \cdot ((k-1)\cdot 2\cdot (3(k-1)\cdot (r+1)+2r))^\alpha$ vectors $\vec{x}\in N$ such that $|\hat{x}|=\alpha$. This means we can apply Lemma~\ref{lem:SF} to find a sunflower $\FFF'$ in $\FFF$ of cardinality at least $(k-1)\cdot 2\Big(3(k-1)\cdot (r+1)+2r\Big)+1$; for ease of presentation, we will identify the elements of $\FFF'$ with the vectors they represent.
  Let $\vec{f}$ be an arbitrarily chosen vector from $\FFF'$; we claim that $\vec{f}$ satisfies the properties claimed in the lemma, and to complete the proof it suffices to establish this claim.

The backward direction is trivial: if $\III'$ is a YES-instance then clearly $\III$ is a YES-instance as well. It is also easy to observe that a completion and diversity set for $\III$ can be computed from a solution and diversity set for $\III'$ in linear time (adding a vector does not change the validity of a solution). What we need to show is that if $\III$ is a YES-instance, then so is $\III'$ (\ie, $(M\setminus \{\vec{f}\}, k, r)$); moreover, this final claim clearly holds if $\III$ admits a solution that does not contain $\vec{f}$.

So, assume that $M$ admits a completion $M^*$ which contains a $k$-diversity set $S=\{\vec{f},\vec{s_1},\dots,\vec{s_{k-1}}\}$. Let $C$ be the core of the sunflower $\FFF'$, and note that all vectors in $\FFF'$ have precisely the same content in the coordinates in $C$.

\paragraph*{Finding a replacement for $\vec{f}$}
We would now like to argue that, for some completion which we will
define later, $\FFF'$ contains a vector that can be used to replace
$\vec{f}$ in the solution.
 
Let $\vec{s_i}\in S$ be an arbitrary vector. First, let us consider the case that, in $M$, $\vec{s_i}$ differs from $\vec{v}$ in more than $3(k-1)\cdot (r+1)+2r$ coordinates (\ie, $\vec{v}[j]\neq \vec{s_i}[j]$ in $M$ for at least $3(k-1)\cdot (r+1)+2r$ choices of $j$).
 Then \emph{every} vector in $\FFF'$ will have Hamming distance greater than $r$ from $\vec{s_i}$ \emph{regardless of the completion}.

Indeed, for every vector $f'\in \FFF'$ there are at most $3(k-1)\cdot (r+1)$  coordinates $j$ such that at least one of $\vec{v}[j]$, $\vec{s_i}[j]$, $\vec{f'}$, meaning that there are at least $2r$ \emph{other} coordinates where $\vec{v}$ differs from $\vec{s_i}$ and which are guaranteed to be complete---and since $\HDIST(\vec{f'},\vec{v})\leq r$, $\vec{f'}$ it must hold that $\HDIST(\vec{f'},\vec{s_i})>r$ (by the triangle inequality). Hence indeed every vector in $\FFF'$ must have distance at least $r+1$ from $\vec{s_i}$, and in this case we will create a set $S_i=\emptyset$ (the meaning of this will become clear later).

Now, consider the converse case, \ie, that $\vec{s_i}$ differs from $\vec{v}$ in at most $3(k-1)\cdot (r+1)+2r$ coordinates. We may now extend the set system over $F$ by adding a set representation of $\vec{s_i}$, specifically by adding a set $Q_i$ such that $\{\blank_j,D_j\}\subseteq Q_i$ if $\vec{s_i}[j]\neq \vec{v}[j]$ (note that since $\vec{v}[j]\neq \blank$, this also includes the case $\vec{s_i}[j]=\blank$) and otherwise $\{\blank_j,D_j\}\cap Q_i=\emptyset$.
Observe that $|Q_i|\leq 2\cdot (3(k-1)\cdot (r+1)+2r)$, and in particular $Q_i\setminus C$ intersects with at most $2\cdot (3(k-1)\cdot (r+1)+2r)$ elements of $\FFF'$. Let $S_i$ be the set of all such elements, \ie, sets in $\FFF'$ which have a non-empty intersection with $Q_i$ outside of the core (formally, these are sets of $\FFF'$ that do intersect $Q_i\setminus C$). 
  
Observe that by the construction of $\FFF'$, there must exist at least one set in the sunflower that does not lie in any $S_i$. To conclude the proof, we will show that there is a completion $M'^*$ of $M'$ such that any arbitrarily chosen vector $\vec{f'}$ in the non-empty set $\FFF'\setminus (\{\vec{f}\}\cup \bigcup_{i\in [k-1]}S_i)$ can replace $\vec{f}$ in the $k$-diversity set~$S$.

\paragraph*{Arguing Replaceability}
Consider a new completion $M'^*$ of $M\setminus \vec{f}$ obtained as follows:
\begin{itemize}
\item For each vector $\vec{w}\in \FFF'\setminus S$, we complete
\begin{enumerate}
\item the $\blank$'s in $C\cup Z$ precisely in the same way as $\vec{f}$, and
\item for every other $\blank$ at coordinate $j$, we set $\vec{w}[j]=-(\vec{v}[j]-1)$ (\ie, to the opposite of $\vec{v}$ -- recall that $\vec{v}[j]\neq \blank$ since $j\not \in Z$); and
\end{enumerate}
\item all other $\blank$'s in all other vectors in $M\setminus \vec{f}$ are completed in precisely the same way as in $M^*$.
 \end{itemize}

Since $M'^*$ precisely matches $M^*$ on all vectors in $S\setminus \vec{f}$, it follows that $S\setminus \vec{f}$ is a $(k-1)$-diversity set in $M'^*$. Moreover, consider for a contradiction that $\HDIST(\vec{f'},\vec{s_i})\leq r$ for some $\vec{s_i}\in S$ \emph{after completion}, \ie, in $M'^*$. Then clearly $\vec{s_i}$ could not differ from $\vec{v}$ in more than  $3(k-1)\cdot (r+1)+2r$ coordinates in $M'$, since---as we already argued---in this case every vector in $\FFF'$ will have Hamming distance greater than $r$ from $\vec{s_i}$ regardless of the completion.

Hence, we must be in the case where $\vec{s_i}$ differed from $\vec{v}$ in at most $3(k-1)\cdot (r+1)+2r$ coordinates in $M'$. Now consider how $\HDIST(\vec{f'},\vec{s_i})$ differs from $\HDIST(\vec{f},\vec{s_i})$. First of all, there is no difference between these two distances on the coordinates in $Z\cup C$ due to our construction of $M'^*$ and choice of $N$. For the remaining coordinates, we will consider separately the set $X$ of coordinates in the petals of $\vec{f}$ and $\vec{f'}$ (\ie, the set $\SB j\in [d]\setminus (Z\cup C) \SM \vec{f}[j]\neq \vec{v}[j] \vee \vec{f'}[j]\neq \vec{v}[j]\SE$), and the set $Y=[d]\setminus (C\cup Z\cup X)$ of all remaining coordinates. It follows that $\vec{v}[j]=\vec{f}[j]=\vec{f'}[j]$ for all coordinates $j\in Y$, and hence there is no difference between the two distances on these coordinates either.

So, all that is left is to consider the difference between $\HDIST(\vec{f'},\vec{s_i})$ and $\HDIST(\vec{f},\vec{s_i})$ on the coordinates in $X$; it will be useful to recall, that(unlike for the sets $Q_i$, the construction of $\FFF'$ guarantees that each coordinate occurs at most once in $\vec{f}$ and also at most once in $\vec{f'}$, and that $\alpha$ is the size of each set in the sunflower $\FFF'$. 
Among the coordinates in $X$, $\vec{f}$ can only differ from $\vec{s_i}$ in \emph{at most} $\alpha-|C|$ many coordinates---notably in the coordinates of its own petal---because the coordinates in the petal of $\vec{f'}$ do not intersect with $Q_i$. 
On the other hand, our construction guarantees that $\vec{f'}$ differs from $\vec{s_i}$ in \emph{at least} $\alpha-|C|$ coordinates in $X$; more precisely, on all coordinates in the petal of $\vec{f'}$, since on these coordinates (1) $\vec{s_i}$ is equal to $\vec{v}$ and (2) $\vec{f'}$ differs from $\vec{v}$.

In summary, we conclude that $\HDIST(\vec{f'},\vec{s_i})\geq \HDIST(\vec{f},\vec{s_i})$ and hence $(S\setminus \{\vec{f}\})\cup \{\vec{f'}\}$ is a $k$-diversity set in $M'^*$, as claimed. 
\end{proof}

We can now establish our main result for \DISPq.

\begin{theorem}
\label{the:IS}
\DISPq{} is fixed-parameter tractable parameterized by $k+r$.
\end{theorem}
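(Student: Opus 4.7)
The plan is to combine Lemmas~\ref{lem:ISmanyq}, \ref{lem:ISdegree}, and~\ref{lem:ISprune} into a kernelisation-plus-brute-force algorithm. The overall scheme I would use is: trivially accept when $k\le 0$; exhaustively apply Lemma~\ref{lem:ISmanyq} to bound the number of blanks per row; exhaustively apply Lemma~\ref{lem:ISprune} to bound every $r$-Hamming neighbourhood; and finally either invoke Lemma~\ref{lem:ISdegree} or brute-force a kernel of bounded size.

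Concretely, while $k\ge 1$ and some $\vec{v}\in M$ contains more than $(k-1)(r+1)$ blanks, I would delete $\vec{v}$ and decrement $k$ as justified by Lemma~\ref{lem:ISmanyq}; once this phase terminates, every row has at most $(k-1)(r+1)$ blanks (and if $k$ has reached $0$ along the way, we simply accept). Then, while some $\vec{v}\in M$ satisfies $|\HN{\vec{v}}{r}|\ge \zeta(k,r)$, I would invoke Lemma~\ref{lem:ISprune} to identify and delete an irrelevant vector in polynomial time. This second rule only removes vectors and does not change $k$, so the per-row blank bound is preserved and the loop terminates in at most $|M|$ iterations. After both phases, every surviving row has $|\HN{\vec{v}}{r}|<\zeta(k,r)$ and at most $(k-1)(r+1)$ blanks.

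If at this point $|M|\ge k\cdot \zeta(k,r)$, Lemma~\ref{lem:ISdegree} returns a $k$-diversity set in polynomial time. Otherwise $|M|< k\cdot \zeta(k,r)$, and I would brute-force by enumerating each of the at most $\binom{|M|}{k}$ candidate subsets $S\subseteq M$ and, for each such $S$, enumerating the at most $\bigl(2^{(k-1)(r+1)}\bigr)^{k}$ completions of the vectors in $S$; we accept if some completion makes all pairwise Hamming distances at least $r+1$. Completions of vectors outside $S$ cannot affect distances within $S$, so restricting the search to $S$ is sound. Since $\zeta(k,r)\in(kr)^{\bigoh(kr)}$, this brute-force stage runs in time bounded by a computable function of $k+r$, and combined with the polynomial overhead of the reductions we obtain the claimed \FPT{} algorithm. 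The main obstacle I anticipate is not any of the individual stages---the technical core is already absorbed by Lemma~\ref{lem:ISprune}---but ensuring that the two reduction rules compose correctly; this follows because each rule only deletes vectors (so no new blanks or new $r$-neighbours are created) and each iteration strictly decreases either $k$ or $|M|$.
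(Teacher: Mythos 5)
Your proposal is correct and follows essentially the same route as the paper: it combines Lemma~\ref{lem:ISmanyq} (bounding blanks per row), Lemma~\ref{lem:ISprune} (removing irrelevant vectors until all $r$-Hamming neighbourhoods are below $\zeta(k,r)$), and then either Lemma~\ref{lem:ISdegree} or a brute force on a kernel of size bounded by $k\cdot\zeta(k,r)$, differing from the paper only in the order in which the size check and the pruning loop are interleaved and in spelling out the brute-force enumeration slightly more explicitly.
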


\begin{proof}
The algorithm proceeds as follows. Given an instance $\III=(M,k,r)$ of \DISPq{}, it first checks whether $M$ contains a vector with more than $(k-1)\cdot (r+1)$ $\blank$'s; if yes, it applies Lemma~\ref{lem:ISmanyq} and restarts on the reduced instance.
Second, it checks whether $|M|\geq k\cdot \zeta(k,r)$; if not, it uses the fact that the number of $\blank$'s and the number of rows is bounded by a function of the parameter to find a completion and a $k$-diversity set in $\III$ (or determine that one does not exist) by brute force.

Third, it checks whether each vector $\vec{v}$ satisfies $|\HN{\vec{v}}{r}|< \zeta(k,r)$; if yes, then it solves $\III$ by invoking Lemma~\ref{lem:ISdegree}. Otherwise, it invokes Lemma~\ref{lem:ISprune} to reduce the cardinality of $M$ by $1$ and restarts. If the algorithm eventually terminates with a ``NO'', then we know that the initial input was a NO-instance; otherwise, it will output a solution which can be transformed into a solution for the original input by the used lemmas. 
\end{proof}

\subsection{Lower Bounds}\label{sec:lb}
 
\begin{theorem}\label{the:dispersionlowerbound}
  \DISP{} is \NP-complete and \Weft\emph{[1]}-hard parameterized by $k$.
\end{theorem}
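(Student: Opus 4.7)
The plan is to exhibit a single polynomial-time reduction that simultaneously witnesses both the \NP-hardness and the \W{1}-hardness parameterized by $k$, by reducing from $k$-\textsc{Independent Set} on arbitrary graphs; that problem is both \NP-hard and \W{1}-hard parameterized by the solution size, so any reduction preserving $k$ exactly will yield both conclusions at once.

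Given an instance $(G,k)$ of $k$-\textsc{Independent Set} with $G=(V,E)$ and maximum degree $D$, I would construct an instance $(M,k,r)$ of \DISP\ over the coordinate set consisting of $E$ together with, for each $v\in V$, a private ``padding'' block $P_v$ of $D-\deg(v)$ coordinates. For each $v\in V$ the vector $\vec{v}$ is defined by $\vec{v}[e]=1$ iff $v\in e$ on the edge coordinates, $\vec{v}=1$ on every coordinate of $P_v$, and $\vec{v}=0$ on $P_u$ for every $u\neq v$. I then put $M=\{\vec{v}\mid v\in V\}$ and set $r=2D-1$.

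For correctness, a direct count shows that for any two distinct vertices $u,v\in V$ the vectors $\vec{u}$ and $\vec{v}$ differ in exactly $(\deg(u)-c)+(\deg(v)-c)$ edge coordinates, where $c=1$ if $\{u,v\}\in E$ and $c=0$ otherwise, together with the entirety of $P_u\cup P_v$. Summing gives $\delta(\vec{u},\vec{v})=2D-2c$, so non-adjacent pairs have Hamming distance exactly $2D$ and adjacent pairs exactly $2D-2$. Since $r=2D-1$, a subset of $k$ vectors from $M$ is a $k$-diversity set if and only if the corresponding vertices form an independent set of size $k$ in $G$. Membership in \NP\ is immediate from guessing such a subset and checking the $\binom{k}{2}$ pairwise distances in polynomial time.

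The purpose of the padding blocks $P_v$ is to homogenise each vertex's contribution to every pairwise distance, so that a single threshold $r$ separates edges from non-edges regardless of the individual degrees---this removes any need to assume $G$ is regular and is the only subtlety in the construction itself. The one genuine caveat to flag is that this reduction produces $r=2D-1$, which scales with the input, so it establishes the \NP-completeness and the \W{1}-hardness parameterized by $k$ asserted in the statement but not the stronger \paraNP-hardness in the parameter $r$ advertised in the introduction; the latter would require a separate, more intricate construction tailored to enforce a constant value of $r$.
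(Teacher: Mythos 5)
Your reduction is correct and essentially identical to the paper's: both pad each vertex's incidence vector with private coordinates so that every vertex contributes the same amount to each pairwise Hamming distance, making non-adjacent pairs differ in exactly two more coordinates than adjacent pairs, and then pick $r$ to separate the two cases. The only immaterial difference is that you pad each vertex up to the maximum degree $D$ and set $r=2D-1$, whereas the paper pads up to $n-1$ and sets $r=2n-4$; your closing caveat that this says nothing about hardness for constant $r$ is also accurate, as the paper handles that in a separate theorem.
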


\begin{proof}
  We prove both \NP-hardness and \W{1}-hardness results by giving a
  polynomial-time \FPT{} reduction from {\sc Independent Set} (IS),
  which is \W{1}-hard~\cite{DowneyFellows13}.

  Let $(G, k)$ be an instance of IS, where $V(G)=\{v_1, \ldots, v_n\}$,
  and let $m=E(G)$.  Fix an arbitrary ordering ${\cal O}=(e_1, \ldots,
  e_m)$ of the edges in $E(G)$.
  
  For each vertex $v_i \in V(G)$, define a vector $\vec{a_i} \in \{0,
  1\}^m$ by setting $\vec{a_i}[j]=1$ if $v_i$ is incident to $e_j$ and
  $\vec{a_i}[j]=0$ otherwise. Now expand the set of coordinates of
  these vectors by adding to each of them $n(n-1)$ new coordinates,
  $n-1$ coordinates for each $v_i$, $i \in [n]$; we refer to the $n-1$
  (extra) coordinates of $v_i$ as the ``private'' coordinates of
  $v_i$. For each $v_i$, $i \in [n]$,  set $n-1-deg(v_i)$ many
  coordinates among the private coordinates of $v_i$ to 1, and all
  other new coordinates of $v_i$ to 0. Let $M=\{\vec{a_i} \mid i \in
  [n]\}$ be the set of expanded vectors, where $\vec{a_i} \in \{0,
  1\}^{m+n(n-1)}$, for $i \in [n]$. The reduction from IS to \DISP{}
  produces the instance $\III=(M, k, 2n-4)$ of \DISP{}; clearly, this
  reduction is a polynomial-time \FPT-reduction.
  
  Observe that, for any two distinct vertices $v_i, v_j \in V(G)$,
  $\HDIST(\vec{a_i},\vec{a_j}) =2n-2$ if $v_i$ and $v_j$ are
  nonadjacent and $\HDIST(\vec{a_i},\vec{a_j}) =2n-4$ if $v_i$ and
  $v_j$ are adjacent.
  
  The proof that $(G, k)$ is a \yes-instance of IS iff $(M, k, 2n-4)$
  is a \yes-instance of \DISP{} is now straightforward. 
\end{proof}

\begin{theorem}\label{the:dispersionlowerbound-rad}
  \DISP{} is \NP-complete even when $r=2$.
\end{theorem}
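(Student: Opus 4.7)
My plan is to establish NP-hardness of \DISP{} with $r=2$ by a polynomial-time reduction from a suitable NP-hard problem. The driving observation is the following elementary identity: for any two weight-$3$ binary vectors $\chi_A,\chi_B\in\{0,1\}^{d}$ (that is, characteristic vectors of $3$-element subsets $A,B$ of the coordinate universe), the Hamming distance satisfies $d(\chi_A,\chi_B)=6-2\,|A\cap B|$. Consequently, $d(\chi_A,\chi_B)\ge 3$ if and only if $|A\cap B|\le 1$. Hence, restricted to weight-$3$ inputs, \DISP{} with $r=2$ is exactly the problem of selecting $k$ hyperedges of a given $3$-uniform hypergraph that pairwise share at most one vertex, i.e., finding a \emph{linear} subfamily of size $k$.

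The first half of the reduction is then immediate: given a $3$-uniform hypergraph $H=(V,E)$ together with an integer $k$, output the \DISP{} instance $(M,k,2)$ with $M=\{\chi_e:e\in E\}\subseteq\{0,1\}^{|V|}$. Polynomial-time construction and correctness both follow directly from the identity. All that remains is to establish NP-hardness of the ``linear $3$-hypergraph packing'' problem on the input side.

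For the latter, I plan to reduce from \textsc{Independent Set} on subcubic graphs, which is well-known to be NP-hard. Given a subcubic graph $G=(V,E)$, I will associate to each vertex $v\in V$ a $3$-element set $S_v$ in a carefully constructed universe so that $|S_u\cap S_v|\ge 2$ exactly when $uv\in E(G)$, and $|S_u\cap S_v|\le 1$ otherwise. A natural design combines a proper edge $4$-colouring of $G$ (guaranteed by Vizing's theorem for subcubic graphs) with auxiliary elements attached to individual edges and vertices; this yields the desired overlap pattern, and then an independent set of size $k$ in $G$ corresponds precisely to a subfamily of $\{S_v\}$ of size $k$ whose members pairwise share at most one element.

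The principal technical obstacle is the realisability step: not every graph can be represented as an overlap-$\ge 2$ graph of $3$-sets (for instance, $K_5$ cannot, as a short pigeonhole argument on the three pairs inside each $3$-set shows). I expect this to be the delicate part of the proof. I will handle it either by restricting to an NP-hard subcubic subclass whose structure admits the required embedding directly, or by inserting local replacement gadgets (extra $3$-sets in fresh auxiliary coordinates) that simulate any non-realisable local configuration while preserving the size of an optimal solution. Once the embedding is in place, the equivalence of the two problems is a routine verification, and composing with the first reduction yields NP-hardness of \DISP{} with $r=2$.
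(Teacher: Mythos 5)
Your first step---the observation that for weight-$3$ characteristic vectors $\delta(\chi_A,\chi_B)=6-2|A\cap B|$, so that \DISP{} with $r=2$ on such inputs is exactly the problem of picking $k$ triples that pairwise intersect in at most one element---is correct and is in the same spirit as the paper's reduction (which also uses weight-$2$ and weight-$3$ vectors and the same distance arithmetic). The overall strategy of reducing from \textsc{Independent Set} also matches the paper. However, the proposal has a genuine gap precisely where you flag ``the delicate part'': you never establish that the conflict structure you need can actually be realised by $3$-sets. Your plan requires, for each vertex $v$ of a subcubic graph $G$, a triple $S_v$ with $|S_u\cap S_v|\ge 2$ iff $uv\in E(G)$, and you correctly observe that not every graph admits such a representation; but you offer only two unexecuted options (``restrict to an NP-hard subcubic subclass'' or ``insert local replacement gadgets'') without constructing either. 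Since a degree-$3$ vertex must expend all three of its $\binom{3}{2}$ pairs on distinct neighbours, and non-adjacent neighbours are forced to receive distinct pairs, the existence of a consistent global assignment is a nontrivial combinatorial claim that the proof cannot simply assert. As written, the reduction is therefore incomplete: its correctness rests entirely on an unproved realisability lemma.

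The paper circumvents exactly this obstacle with a simple preprocessing step that your ``local replacement gadget'' idea is gesturing at: it first subdivides every edge of $G$ twice, noting that $G$ has an independent set of size $k$ iff the $2$-subdivision $G'$ has one of size $|E(G)|+k$. The subdivided graph has a very rigid local structure (each edge becomes a path $v_i - e^1 - e^2 - v_j$), and this structure is trivially realisable: $v_i$ gets a weight-$2$ vector on two private coordinates, and $e^1,e^2$ get weight-$3$ vectors overlapping the endpoints' coordinates so that adjacency in $G'$ coincides with Hamming distance at most $2$. If you want to salvage your version, the cleanest fix is to adopt this subdivision step (or prove a realisability lemma for some NP-hard class of graphs); without one of these, the argument does not go through.
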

\begin{proof}
  We reduce from the \textsc{Independent Set} problem (which
  is \NP-complete). Let $(G,k)$ be an instance
  of \textsc{Independent Set} and let $G'$ be the graph obtained from
  $G$ after subdividing every edge exactly twice. We first observe
  that $G$ has an independent set of size at least $k$ if and only if
  $G'$ has an independent set of size at least $|E(G)|+k$. This is because
  if $I \subseteq V(G)$ is an independent set of $G$, then we can add one
  of the subdivision vertices for every edge of $G$ because $I$ does not contain
  both endpoints of an edge. On the other hand, if $I \subseteq V(G')$ is an
  independent set of $G'$, then we can assume without loss of generality that $I$
  does not contain both endpoints of an edge in $G$ because we could
  easily transform $I$ into an independent set of the same size by
  replacing one of the endpoints of such an edge with a subdivided
  vertex.

  Next we construct an instance $\III=(M,|E(G)|+k,2)$ of \DISP{} in
  polynomial-time such that $G'$ has an independent set of size at
  least $|E(G)|-k$ if and only if $\III$ is a \yes-instance. We set
  $d=2|V(G)|$ and obtain $M$ as follows. Let
  $V(G)=\{v_1,\dots,v_n\}$. For every $v_i \in V(G)$, we
  add the vector $\vec{v}_i$ that is $1$ at the two coordinates $i$
  and $i+1$ and otherwise $0$. Moreover, for every $e=v_iv_j\in E(G)$,
  we add the vector $e^1$ that is $1$ at the coordinates $i$, $i+1$,
  and $j$ and the vector $e^2$ that is $1$ at the coordinates $j$, $j+1$,
  and $i$. This completes the construction of $\III$. The equivalence
  now follows because two vectors in $M$ have distance at most $r=2$
  if and only if their corresponding vertices in $G'$ are adjacent;
  here $e^1$ and $e^2$ correspond to the two subdivision vertices on
  the edge $e$. 
\end{proof}

\section{On Graph Problems on Induced Subgraphs of the Hypercubes}
\label{sec:implications}

In this section, we discuss the implications of the results in
the previous section for fundamental problems
defined on induced subgraphs of powers of the hypercube graph.

In particular, the $d$-dimensional hypercube graph is the graph $Q_d$
whose vertex set is the set of all Boolean $d$-dimensional vectors,
and two vertices are adjacent if and only if their two vectors differ
in precisely $1$ coordinate. We can then define the class
$\QQQ^r_d$  as the class of all graphs that are induced subgraphs of the $r$-th
power of $Q_d$. We note that, in line with the commonly used
definition of hypercube graphs~\cite{Dvorak2007,HararyHayessurvey}, we
consider the vertices in $\QQQ^r_d$ to be vectors and hence every
graph $G\in \QQQ^r_d$ contains an explicit characterisation of its
vertices as vectors.

In this setting, it is straightforward to observe that \DISP{} is
precisely the {\sc Independent Set} problem on $\QQQ_d^r$. Moreover,
the clustering problems \INCLUS{}, \PAIRCLUS, and \LDIAMCL\ considered
in~\cite{eibenesa,eibenjcss} are precisely the {\sc Dominating Set}, {\sc Partition Into Cliques}, and {\sc
Clique} problems, respectively, on $\QQQ_d^r$.
Therefore, all the
upper and lower bound results derived in this paper and in~\cite{eibenesa,eibenjcss} pertaining to
these clustering problems hold true for their corresponding graph
problems on~$\QQQ_d^r$.

\begin{corollary}
\label{cor:graphs}
Given $r,d,k \in \Nat$ and a graph $G\in \QQQ^r_d$, determining whether $G$ has a:
\begin{itemize}
\item dominating set of size $k$ is \FPT\ parameterized by $k+r$;
\item partition into $k$ cliques is \FPT\ parameterized by $k+r$;
\item independent set of size $k$ is \FPT\ parameterized by $k+r$;
\item clique of size $k$ is \FPT\ parameterized by $r$.
\end{itemize}
\end{corollary}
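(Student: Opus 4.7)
The plan is to present the corollary as a dictionary translation: each of the four graph-theoretic objectives on $\QQQ^r_d$ is literally one of the Hamming-distance clustering problems whose parameterised complexity has already been settled, and the four bounds will follow by invoking the appropriate results one at a time.

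First I would unpack the definition of $\QQQ^r_d$ so that every $G\in\QQQ^r_d$ comes with its vertex set represented as an explicit subset $M\subseteq\{0,1\}^d$ with $\vec{u}\vec{v}\in E(G)$ if and only if $\HDIST(\vec{u},\vec{v})\leq r$. Using this reading I would rephrase each of the four objectives: an independent set of size $k$ is a choice of $k$ vectors in $M$ with pairwise Hamming distance at least $r+1$, \ie, a \DISP\ instance with no $\blank$ entries; a clique of size $k$ is a choice of $k$ vectors in $M$ with pairwise Hamming distance at most $r$, \ie, an \LDIAMCL\ instance; a dominating set of size $k$ is a choice of $k$ vectors in $M$ such that every vector in $M$ has Hamming distance at most $r$ from one of them, \ie, an \INCLUS\ instance; and a partition into $k$ cliques is a partition of $M$ into $k$ parts each of Hamming diameter at most $r$, \ie, a \PAIRCLUS\ instance.

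With these four equivalences in hand, I would close the proof by citation: the independent-set bound follows from Theorem~\ref{the:IS} specialised to instances without $\blank$ entries, while the dominating-set, clique-partition, and clique bounds follow from the FPT algorithms for \INCLUS, \PAIRCLUS, and \LDIAMCL\ established in~\cite{eibenesa,eibenjcss} under exactly the parameterisations named in the corollary. I do not anticipate a real obstacle: the content of the corollary is precisely the remark that these four graph problems on $\QQQ^r_d$ \emph{are} the Hamming-distance clustering problems already settled in the literature, and the only care point is a per-case check that the source parameterisation matches the one stated in the corollary---which is immediate in each of the four cases.
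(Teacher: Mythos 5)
Your proposal is correct and follows essentially the same route as the paper: the paper likewise proves the corollary by observing that \textsc{Independent Set}, \textsc{Dominating Set}, \textsc{Partition Into Cliques}, and \textsc{Clique} on $\QQQ^r_d$ are precisely \DISP{}, \INCLUS{}, \PAIRCLUS{}, and \LDIAMCL{}, respectively, and then invoking Theorem~\ref{the:IS} for the first and the results of~\cite{eibenesa,eibenjcss} for the other three. Your case-by-case matching of parameterisations is exactly the check the paper relies on.
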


 We note that all the tractability results outlined in
Corollary~\ref{cor:graphs} are tight, which follows from the
lower-bound results obtained in Section~\ref{sec:lb} and in~\cite{eibenesa,eibenjcss}, in the sense that dropping any parameter from our parameterizations leads to an intractable problem.
  
Observing that three of the graph properties in the problems discussed above are expressible in First Order Logic (FO) and result in FO formulas whose length is a function of the parameter $k$, an interesting question that ensues from the above discussion is whether these positive results can be extended to the generic problem of First-Order Model Checking~\cite{Libkin04,GroheKS17}, formalised below. We will show next that the answer to this question is negative---and, in fact, remains negative even when we restrict ourselves to induced subgraphs of hypercubes (\ie, for $r=1$).

\pbDefP{$\QQQ$-\textsc{FO-Model-Checking}}{A first-order (FO) formula
  $\phi$, integers $d,r$, and a graph $G \in \QQQ_d^r$.}{$|\Phi|$}{Does $G \models \Phi$?}
We denote by \textsc{FO-Model-Checking} the general FO Model Checking
problem on graphs, \ie, $\CCC$-\textsc{FO-Model-Checking} with $\CCC$
being the class of all graphs.

\newcommand{\HHH}{\mathcal{H}}
 
\begin{lemma}\label{lem:subdivision}
  Let $H$ be an arbitrary graph. There is a graph $G \in \QQQ^1_{|V(H)|+|E(H)|}$ such that $G$ is isomorphic to the graph $H'$ obtained from $H$ after subdividing every edge of $H$ exactly once and attaching a leaf to every vertex resulting from a subdivision. Moreover, $G$ can be computed from $H$ in polynomial time.
   \end{lemma}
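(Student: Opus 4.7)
My plan is to exhibit $G$ explicitly as an induced subgraph of the hypercube $Q_d$ with $d=|V(H)|+|E(H)|$, using a coordinate set indexed by $V(H)\cup E(H)$. I define a map $\phi:V(H')\to\{0,1\}^d$ as follows: every original vertex $v\in V(H)$ is mapped to the vector that is $1$ exactly at coordinate $v$; for every edge $e=uv\in E(H)$, the subdivision vertex $w_e$ of $H'$ is mapped to the vector that is $1$ exactly at coordinates $u$ and $v$; and the leaf $\ell_e$ attached to $w_e$ is mapped to the vector that is $1$ exactly at coordinates $u$, $v$, and $e$. I then take $G$ to be the subgraph of $Q_d$ induced by the image of $\phi$, which lies in $\QQQ^1_d$ by construction and is clearly computable from $H$ in polynomial time.

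To show that $\phi$ is an isomorphism between $H'$ and $G$, I first note that it is injective: the three types of vertices of $H'$ are sent to vectors of pairwise distinct support sizes ($1$, $2$, and $3$), and distinct vertices of the same type receive distinct supports. For the edge-preservation property I verify by a case analysis over pairs of vertex types that $\HDIST(\phi(x),\phi(y))=1$ holds precisely when $xy$ is an edge of $H'$. The two positive cases are immediate: if $v$ is an endpoint of $e=uv'$, then $\phi(v)$ and $\phi(w_e)$ differ only at coordinate $v'$; and $\phi(w_e)$ and $\phi(\ell_e)$ differ only at coordinate $e$. For every remaining pair I show $\HDIST\geq 2$ by inspecting supports: two distinct singletons differ in two coordinates; a singleton $\{v\}$ with $v\notin e$ versus $\phi(w_e)$ or $\phi(\ell_e)$ differs in at least three; two distinct $2$-element endpoint sets differ in at least two by cardinality; for $e\neq e'$, the coordinate $e'$ separates $\phi(\ell_{e'})$ from any non-$w_{e'}$ image, while a second difference in some $V(H)$-coordinate is forced by the earlier cases; and the leaf coordinates $e$, $e'$ separate $\phi(\ell_e)$ from $\phi(\ell_{e'})$ when $e\neq e'$.

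I expect no serious conceptual obstacle; the only design choice requiring care is where to place the ``private'' leaf coordinate, which must bring $\ell_e$ to Hamming distance exactly $1$ from $w_e$ while keeping it at distance at least $2$ from every other vertex of $H'$. Putting the extra $1$ at coordinate $e\in E(H)$ (rather than at some vertex-indexed coordinate) does exactly this, because $e$ never appears in the support of any $\phi(v)$ or $\phi(w_{e'})$, so the coordinate $e$ automatically contributes to every symmetric difference involving $\ell_e$ outside the one intended edge to $w_e$. Combined with the case analysis above, this yields the required isomorphism and completes the plan.
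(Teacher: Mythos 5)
Your construction is exactly the one in the paper: original vertices become singleton indicator vectors, subdivision vertices become the two-endpoint indicator vectors, and each leaf adds a private edge-indexed coordinate to its subdivision vertex's vector. The paper leaves the distance verification as an immediate observation, whereas you spell out the case analysis, but the approach and the resulting graph are identical, and your argument is correct.
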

\begin{proof}
 Let $n=|V(H)|$ and $m=|E(H)|$.
To prove the lemma, we construct a matrix representation $M \in \{0,1\}^{n+m}$ of $H'$ which has one row (vector) for every vertex in $H$ and where two vertices in $H'$ are
  adjacent if and only if their corresponding rows in $M$ have Hamming
  distance at most $1$. Let $v_1,\ldots, v_n$ be an arbitrary ordering of the vertices of
  $H$, and $e_1,\ldots,e_m$ be an arbitrary ordering of its edges. Then, $M$ contains one row $r_i$ for every $i \in [n]$ that is $1$
  at its $i$-th entry and $0$ at all other entries. Moreover, for
  every edge $e_\ell=\{v_i,v_j\} \in E(H)$, $M$ contains the following
  two rows:
  \begin{itemize}
  \item the row $r_e$ (corresponding to the degree-3 vertex in $H'$ obtained from $e$) that
    is $1$ at the $i$-th and $j$-th entries, and $0$ at all other entries;
    and
  \item the row $r_e'$ (corresponding to the leaf in $H'$ obtained from $e$) that
    is $1$ at the $i$-th, $j$-th, and ($n+\ell$)-th entries, and $0$ at all other entries.
  \end{itemize}
  This completes the construction of $M$. Clearly, two rows in $M$
  have Hamming distance at most one if and only if their corresponding
  vertices in $H'$ are adjacent, as required. 
\end{proof}
 
\begin{theorem}

  $\QQQ$-\textsc{FO-model-checking} is \Weft\emph{[t]}-hard for
  every $t \in \Nat^*$.
\end{theorem}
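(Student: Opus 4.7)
The plan is to give a polynomial-time \FPT-reduction from the unrestricted \textsc{FO-Model-Checking} problem on the class of all graphs, which is \W{t}-hard for every $t\in\Nat^*$ by classical results (see~\cite{FlumGrohe06}). Given an input $(\phi,H)$, the reduction will produce $(\phi',G)$ with $G\in \QQQ^1_d$ for some $d$ polynomial in $|V(H)|+|E(H)|$ and $|\phi'|=\bigoh(|\phi|)$, such that $H\models \phi$ iff $G\models \phi'$. The core tool is Lemma~\ref{lem:subdivision}, complemented by a preprocessing step that ensures $G$ can be FO-decoded back to $H$.

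\textbf{Construction.} First I would preprocess $H$ into a graph $\tilde{H}$ of minimum degree at least $2$ by attaching to every $v\in V(H)$ two fresh private triangles: for $i\in\{1,2\}$, new vertices $a_v^i,b_v^i$ together with the edges $\{v,a_v^i\},\{v,b_v^i\},\{a_v^i,b_v^i\}$. This adds no edge inside $V(H)$, so $H$-adjacency is preserved, and every $v\in V(H)$ acquires a distinctive local signature. Then I invoke Lemma~\ref{lem:subdivision} on $\tilde{H}$ to obtain $G\in \QQQ^1_{|V(\tilde{H})|+|E(\tilde{H})|}$ isomorphic to $\tilde{H}'$. Because $\tilde{H}$ has minimum degree $\geq 2$, the three vertex types in $G$ are cleanly FO-definable: \emph{leaves} are the degree-$1$ vertices; \emph{subdivision vertices} are the degree-$3$ vertices that have a degree-$1$ neighbour; \emph{originals} are the rest. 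I then define the surrogate adjacency $N^*(x,y):=\exists u\,(\text{isSubdiv}(u)\wedge E(x,u)\wedge E(u,y))$, which on pairs of originals coincides with the edge relation of $\tilde{H}$.

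\textbf{Translation.} Let $\text{isInH}(v)$ be the FO formula stating that $v$ is an original for which there exist four pairwise distinct originals $a_1,b_1,a_2,b_2$ such that, for each $i\in\{1,2\}$, $\{v,a_i,b_i\}$ is a triangle in the $N^*$-graph on originals and each of $a_i$ and $b_i$ has exactly two original $N^*$-neighbours (namely $v$ and the other tip). Since each gadget vertex $a_v^i$ has only two original $N^*$-neighbours in $\tilde{H}$, no gadget vertex can be the apex even of one such triangle; since every $v\in V(H)$ carries two attached private triangles whose tips meet exactly the required degree-$2$ signature, $\text{isInH}$ holds in $G$ exactly on the copies of $V(H)$. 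The formula $\phi'$ is then obtained from $\phi$ by relativising every quantifier to $\text{isInH}$ and replacing each atom $E(x,y)$ by $N^*(x,y)$; a routine structural induction on $\phi$ gives $G\models\phi'$ iff $H\models\phi$, and clearly $|\phi'|=\bigoh(|\phi|)$ and $G$ is computable from $H$ in polynomial time.

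\textbf{Main obstacle.} The principal technical difficulty is designing the gadget and the $\text{isInH}$ formula so they interact correctly across the subdivision: without the minimum-degree-$2$ condition on $\tilde{H}$, leaves of $\tilde{H}'$ would be indistinguishable from degree-$1$ originals, breaking the FO definitions of the three vertex types; and without two \emph{disjoint} private triangles, a gadget tip attached to an isolated vertex of $H$ could spuriously satisfy the $\text{isInH}$ pattern (because the would-be apex would then have exactly two $N^*$-neighbours, mimicking an original). The double private triangle sidesteps both issues uniformly while keeping the gadget of size $\bigoh(1)$, which is exactly what is needed to make the translation $\phi\mapsto\phi'$ bounded in size and hence parameter-preserving, yielding the claimed \W{t}-hardness for every $t\in\Nat^*$.
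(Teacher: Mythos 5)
Your proposal is correct and follows the same overall strategy as the paper: reduce from \textsc{FO Model Checking} on general graphs, encode $H$ as an induced subgraph of the hypercube via the subdivide-and-attach-a-leaf construction of Lemma~\ref{lem:subdivision}, then relativise quantifiers to an FO-definable copy of $V(H)$ and simulate each atom $E(x,y)$ by a length-two path through a subdivision vertex. The one genuine difference is your preprocessing step: you first boost every vertex of $H$ to minimum degree two by attaching two private triangles, and then recognise the copies of $V(H)$ by the resulting local pattern (two disjoint $N^*$-triangles with degree-two tips). The paper instead tries to define the originals directly in $H'$ by the formula $\phi_V(x)$ stating that every neighbour of $x$ has a neighbour other than $x$. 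Your extra gadget buys real robustness: in the paper's $H'$, a degree-one vertex $u$ of $H$ and the pendant leaf hanging off the subdivision vertex of $u$'s unique edge are exchanged by an automorphism of $H'$, so no FO formula can separate them, and the paper's decoding is therefore delicate (or needs the tacit assumption that $H$ has no low-degree vertices). Your construction removes this issue at the cost of a constant-size gadget per vertex and a slightly longer (but still $\bigoh(1)$-size) recognition formula, so the reduction remains a polynomial-time \FPT-reduction with $|\phi'|=\bigoh(|\phi|)$, which is all that is needed for \W{t}-hardness.
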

\begin{proof}
  We give a parameterized reduction from \textsc{FO Model
    Checking}, which is \Weft[t]-hard for
  every $t \in \Nat^*$. Let $\III:=(\Phi,H)$ be an instance of \textsc{FO Model
    Checking}. We will show the theorem by constructing the equivalent
  instance $\III':=(\Phi',G)$ such that $G \in \QQQ_d^1$ and
  $|\Phi|\leq f(|\Phi'|)$ for some computable function
  $f$ and value $d$ that is polynomially bounded in the input size. $G$ is obtained from $H$ in the same manner as in Lemma~\ref{lem:subdivision}.
     Moreover, $\Phi'$ is obtained
  from $\Phi$ as follows:
  \begin{itemize}
  \item Let $\phi_V(x)$ be the formula that holds for a variable $x$
    if and only if $x$ corresponds to one of the original vertices in
    $G$, \ie, $\phi_V(x):=\forall y E(x,y) \exists z
    \neq x \land E(y,z)$;
  \item replace every subformula of the form $\exists x \phi$ (for some
    variable $x$ and some subformula $\phi$ of $\Phi$) with the
    formula $\exists x \phi_V(x) \land \phi$;
  \item replace every subformula of the form $\forall x \phi$ (for some
    variable $x$ and some subformula $\phi$ of $\Phi$) with the
    formula $\forall x \phi_V(x) \rightarrow \phi$; and
  \item replace every atom $E(x,y)$, where $E$ is the adjacency
    predicate and $x$ and $y$ are variables, with the formula $\exists
    s E(x,s) \land E(s,y)\land x\neq y$.
  \end{itemize}
  It is straightforward now to show that $H \models \Phi$ if and only if
  $G \models \Phi'$, and that $|\Phi'|\leq 20|\Phi|$. Moreover, because of Lemma~\ref{lem:subdivision},
  $G' \in \QQQ_d^1$, as required. 
\end{proof}

\section{Conclusion}
In this paper, we studied the parameterized complexity of the classical {\sc Independent Set} problem on induced subgraphs of powers of hypercubes, but with the additional complication that the ``positions'' of the vertices in the hypercube representation may be partially unknown. We considered the two most natural parameters for the problem: the size $k$ of the independent set and the power $r$ of the hypercube, and provided a complete characterisation of the problem's complexity w.r.t.\ $k$ and $r$.
  We also performed a meta-investigation of the parameterized complexity of graph problems on this graph class that are expressible in FO logic and showed the existence of such problems that are parameterized intractable.

A natural future direction of our work is to study the parameterized complexity of other graph problems on this class, in particular those that have applications in clustering. One famous open problem that comes to mind is the $p$-center problem~\cite{frieze,lingashammingcenter}. The problem can be formulated similarly to the above setting, with the exception of allowing the selection of vertices to be from the whole hypercube, as opposed to restricting them to the input subgraph. In particular, the well-known $p$-centers problem reduces to the $k$-dominating set problem 
in the $r$-th power of the hypercube graph, but where the $k$ vertices in the dominating set are not restricted to the input subgraph, but can be chosen from $\QQQ_d$. This problem was shown to be \FPT{} parameterized by $k+r$~\cite{eibenjcss}. An intriguing \NP-hard restriction of the problem is the problem slice  corresponding to $p=1$, or what is known as the 1-center problem, or equivalently, the {\sc Closest String} problem~\cite{lingashammingcenter,li}. The parameterized complexity of the problem paramertized by each of $k$ and $r$ alone remain important open questions. 

\subparagraph*{Acknowledgments.} The authors wish to thank Subrahmanyam Kalyanasundaram, Suryaansh Jain and Kartheek Sriram Tammana for their help with identifying an error in the previous version of this manuscript.

 \bibliographystyle{plainurl}
\bibliography{literature}
  
\end{document}